\def\BibTeX{{\rm B\kern-.05em{\sc i\kern-.025em b}\kern-.08em
    T\kern-.1667em\lower.7ex\hbox{E}\kern-.125emX}}
\begin{document}

\newtheorem{Definition}{Definition}
\newtheorem{Theorem}{Theorem}
\newtheorem{Lemma}{Lemma}

\title{
Differential Privacy Preserving Quantum Computing via Projection Operator Measurements
}

\author{
\IEEEauthorblockN{Yuqing Li$^*$, Yusheng Zhao$^*$, Xinyue Zhang$^\dagger$, Hui Zhong$^\ddagger$, Miao Pan$^\ddagger$, Chi Zhang$^*$}
\IEEEauthorblockA{$^*$\textit{School of Cyberspace Science and Technology},
\textit{University of Science and Technology of China}, Hefei,  P. R. China}
\IEEEauthorblockA{$^\dagger$\textit{Department of Computer Science},
\textit{Kennesaw State University}, Marietta, USA}
\IEEEauthorblockA{$^\ddagger$\textit{Department of Electrical and Computer Engineering},
\textit{University of Houston}, Houston, USA}
\IEEEauthorblockA{\{lyq62693641, yushengzhao\}@mail.ustc.edu.cn, xzhang48@kennesaw.edu, hzhong5@uh.edu, \\mpan2@central.uh.edu, chizhang@ustc.edu.cn}
}

\maketitle

\begin{abstract}
Quantum computing has been widely applied in various fields, such as quantum physics simulations, quantum machine learning, and big data analysis. However, in the domains of data-driven paradigm, how to ensure the privacy of the database is becoming a vital problem. For classical computing, we can incorporate the concept of differential privacy (DP) to meet the standard of privacy preservation by manually adding the noise. In the quantum computing scenario, researchers have extended classic DP to quantum differential privacy (QDP) by considering the quantum noise. In this paper, we propose a novel approach to satisfy the QDP definition by considering the errors generated by the projection operator measurement, which is denoted as shot noises. Then, we discuss the amount of privacy budget that can be achieved with shot noises, which serves as a metric for the level of privacy protection. Furthermore, we provide the QDP of shot noise in quantum circuits with depolarizing noise. Through numerical simulations, we show that shot noise can effectively provide privacy protection in quantum computing.

\end{abstract}

\begin{IEEEkeywords}
Quantum Computation, Differential Privacy, Projection Operator Measurements
\end{IEEEkeywords}

\section{Introduction}
Quantum computing brings new opportunities to the high performance computing family and has already shown excellent potential in several fields, such as drug research \cite{cao2018potential}, materials engineering \cite{bauer2020quantum}, and financial investment \cite{orus2019quantum}. Compared with traditional computing, quantum computing handles problems more efficiently, especially when dealing with large data sets and complex operations. With the development of quantum computing, a variety of quantum algorithms have emerged, including Grover's algorithm \cite{grover1996fast} and Shor's algorithm \cite{shor1994algorithms}, which demonstrate significant quantum advantages. However, due to the sensitive datasets that will be used in quantum computing, the public is always worried about information leakage in the quantum computing process. Therefore, how to ensure the privacy of the data in quantum computing becomes a crucial issue.

\par Differential privacy (DP), which was first proposed by Dwork et al. in~\cite{dwork2014algorithmic}, has evolved into a reliable privacy preserving method. As its strict mathematical definition, DP guarantees that the results of an algorithm do not reveal any individual's information with minor alterations. In the classical scenario, one of the most common DP mechanisms is to add Gaussian noise to the statistical information extracted from a private dataset. Inspired by the wide applications of classical DP, researchers aim to propose the concept of quantum differential privacy (QDP) to protect quantum algorithms \cite{zhou2017differential, hirche2023quantum, watkins2023quantum, Du_2021Quantum} against privacy leakage. However, quantum algorithms are all based on the quantum states, which means that we cannot simply introduce noise (e.g., Gaussian noise \cite{dwork2014algorithmic}) as we did in the classical computing.

\par As a result, we need to explore alternative approaches to achieve DP in quantum computing. Researchers have recently extended this concept by leveraging quantum circuit noises to achieve QDP. 
QDP is initially proposed by \cite{zhou2017differential} and has subsequently been experimented on simulated quantum devices \cite{watkins2023quantum}. Due to the presence of quantum noise in current quantum devices, particularly in Noisy Intermediate-Scale Quantum (NISQ) devices, researchers investigate how quantum algorithms executed on these noisy circuits can achieve DP guarantee. In previous research, the focus was primarily on studying the privacy budget resulting from quantum noise, such as depolarizing noise and amplitude/phase damping noise~\cite{zhou2017differential,guan2023detecting}. 

However, these previous works have ignored an important aspect, which is the impact of measurement errors. In this paper, we define it as shot noise. We believe that implementing QDP through the amplification of shot noise is easier than introducing additional quantum noise in the circuit. Therefore, in this paper, we will address and discuss the privacy budget consumed during the measurement, which serves as a metric for the level of privacy protection. Measurement is described by measurement operators. In various measurement operators applied in quantum computing, projection operators are typical and practical, which are often used in diverse quantum tasks \cite{ivanova2020evaluating}. Regarding the discussion of shot noise, Wecker and Wandzura in~\cite{Wecker2015Dave} utilized the sampling theorem to model the measurement error in their work.
Different from their work, we further give a QDP privacy budget based on the shot noise measured from the projected observations.
Our prominent contributions are summarized as follows:
\begin{itemize}
    \item We find that shot noise can achieve QDP, which is also an important noise source of quantum computing. In a sense, it is an easier way to achieve the QDP by just reducing the number of shots since the shot noise increases with less number of shots. Moreover, we discuss both the privacy budget in the noisy and noiseless quantum circuits.
    \item We conduct numerical simulations to show the impacts of key parameters on the proposed QDP privacy budget. These parameters include the distance between two quantum databases, the number of shots, and the level of quantum noise in the circuit.
\end{itemize}

The rest of the section is as follows. In section \uppercase\expandafter{\romannumeral 2}, we present the background of quantum computing, the formal definition of QDP and emphasize the significance of shot noise in it. In section \uppercase\expandafter{\romannumeral 3}, we formalize the model, analyze the application of QDP in quantum circuits with depolarizing noise, and discuss the effects of shot noise brought by projection operators. In section \uppercase\expandafter{\romannumeral 4}, we provide numerical simulations of our results. Finally, we draw conclusions and discuss future work.
\section{PRELIMINARIES}

In this section, we introduce quantum computing background, quantum measurement and the basic formulas of classical DP and QDP.  

\subsection{Quantum Computing}

The notion of qubit is proposed to represent quantum states, and its effect is analogous to classical bits in classical computers. Single-qubit is commonly written using Dirac notation as:
\begin{equation}
	|\psi\rangle = 
    \begin{pmatrix}
		a\\b
    \end{pmatrix} 
    = a |0\rangle + b |1\rangle,
\mbox{where} \ |a|^2 + |b|^2 = 1.
\end{equation}

In quantum computations, quantum systems can be classified into pure states and mixed states, based on whether they can be represented in a tensor product form. For example, in multi-qubit systems, especially those interacting with the environment, it is common for the subsystems to be in a mixed state. To conveniently represent the mixed state, the concept of density operator is introduced, defined as follows:
\begin{equation}
    {\rho}= \sum_k p_k | {\psi}_k \rangle \langle
    {\psi}_k |,
\end{equation}
where the $\{\psi_k \}$ are the ensemble of $\rho$.

Quantum computation describes the manipulation and transformation of quantum states. Like the classical computers that are built on electrical circuits consisting of logic gates, a quantum algorithm is based on a quantum circuit
containing various quantum gates \cite{nielsen2010quantum}.

We have depicted the basic circuit structure of quantum computing in Fig. \ref{fig:quantum_circuit}. The orange section is a quantum encoder, which transfers classical data to quantum data. The blue section is a quantum circuit with quantum gate. The circuits of quantum computing mainly include the two parts. On the rightmost side of Fig. \ref{fig:quantum_circuit}, the measurement provides information about the final quantum state.
\begin{figure}[htbp]
	\centering
	\includegraphics[width=0.475\textwidth]{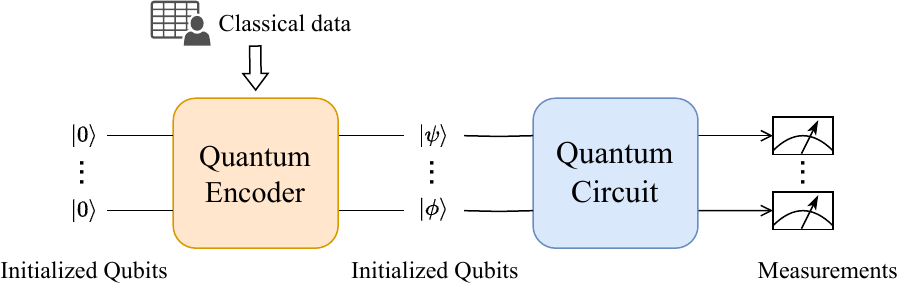}
	\caption{ Fig. 1: The basic structure of quantum computing.}
	\label{fig:quantum_circuit}
\end{figure}
\par A quantum circuit involves a series of quantum logic gates that perform unitary transformations on the quantum states, i.e., $U = U_1 U_2 \ldots U_n$. Each gate corresponds to a specific unitary matrix $U_i$,  which allows us to express the transformation as $| {\psi}' \rangle = U | {\psi} \rangle$ for a pure state and ${\rho}' = U^\dagger {\rho}U$ for a mixed state. 

\subsection{Quantum Measurement}
To provide a clearer explanation of shot noise, we will elaborate on the concept of quantum measurement. Quantum measurement is indeed a crucial step in the whole quantum computation \cite{Schlosshauer2013Snapshot}, which allows us to extract information from quantum systems. When a measurement is performed on a quantum system, the quantum state collapses into one of the eigenstates of the observable. The outcomes of quantum measurements are probabilistic,  so to improve the accuracy, we used to perform multiple measurements on the circuit \cite{Wecker2015Dave}. For quantum computing, it is common practice to obtain the expectation value of these measuring outputs and use it as the computation outcome \cite{Torlai2020Giacomo}.

Measurements are described by observable and the measurement operator is Hermitian, which satisfies $M^\dagger = M$. One of the important measurement methods is the \textit{Projection-Valued Measure (PVM)}, and we describe it by an observable $M$. The measurement expectation of the observable $M$ can be represented by the following equation:
\begin{equation}
    \begin{aligned}
	    \langle M \rangle &= \sum_{j} p_j \langle \psi_j | M | \psi_j \rangle \\
     &= \sum_{j} p_j \text{Tr}(|\psi_j \rangle \langle \psi_j| M) \\
        &= \text{Tr}\left(\sum_{j} p_j |\psi_j \rangle \langle \psi_j| M \right)= \text{Tr}(\rho M).
    \end{aligned}
\end{equation}

The projection operator is the simplest measurement operator in projection measurement and is widely used in quantum computing. We denote the projection operator as $\{M_i\}$, where the eigenvalues of $M_i$ are 0 and 1. The eigenvalue 1 indicates that a vector is projected within the subspace, while the eigenvalue 0 indicates that a vector is projected outside the subspace. For example, the rank of 2-dimensional projection operators $| 0 \rangle \langle 0 |$ and $| 1 \rangle \langle 1 |$ are both 1, meaning that it projects a quantum bit onto a one-dimensional subspace, i.e.,
$\begin{pmatrix}a\\b\end{pmatrix} \rightarrow  \begin{pmatrix}a\\0\end{pmatrix}$ or 
$\begin{pmatrix}a\\b\end{pmatrix} \rightarrow  \begin{pmatrix}0\\b\end{pmatrix}$.

\begin{figure*}[ht]
	\centering
     \includegraphics[width=1\textwidth]{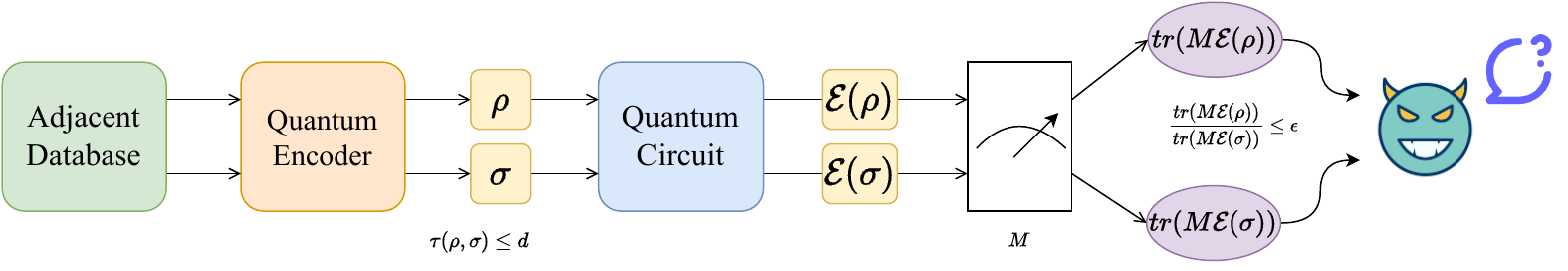}
	\caption{Fig. 2: Overview of the proposed quantum differential privacy.}
	\label{fig:dp_model}
\end{figure*}

\subsection{From classical to quantum differential privacy}
DP has evolved as a promising method for assessing an algorithm’s capacity for privacy preservation. 
We can define classical DP as follows.
\begin{Definition}[Classical Differential Privacy \cite{dwork2014algorithmic}]
	Supposed $\mathcal{M}$ is a randomized mechanism that takes as input entries $E\in \mathcal{D}$, and $\mathcal{S}$ is a set of possible outputs for the mapping $\mathcal{M}(\mathcal{D})$.
    Then  $\mathcal{M}$ gives $(\epsilon,\delta)$-DP if for a distance metric $h(\cdot,\cdot)$ and all data sets $E,E'\in \mathcal{D}$ satisfying $h(E,E')\leq 1$, all possible outcomes $\mathcal{S} \subseteq Range(\mathcal{M})$, we have 
	\begin{equation}
		\mathrm{Pr}[\mathcal{M}(E)\in \mathcal{S}] \leq \exp(\epsilon) \cdot \mathrm{Pr}[\mathcal{M}(E')\in \mathcal{S}]  + \delta.
	\end{equation}
    $\mathcal{M}$ is said to satisfy $\epsilon$-differential privacy if $\delta=0$.
    \label{def:cdp}
\end{Definition}

To generalize the DP definition into the quantum scenario, trace distance is introduced as a common distance metric that measures the closeness between two quantum states \cite{nielsen2010quantum}.
For two quantum states $\rho$ and $\sigma$ which represented by density matrices, we have
\begin{equation}\label{distance}
    \tau(\rho, \sigma) = \frac{1}{2}\text{Tr}(|\rho-\sigma|).
\end{equation}  
Several papers \cite{zhou2017differential, hirche2023quantum} have demonstrated the definition of QDP as follows.

\begin{Definition}[Quantum Differential Privacy \cite{zhou2017differential}]
    Supposed $\mathcal{A}=(\mathcal{E},\{M_k\}_{k\in\mathcal{S}})$ is a 
    quantum mechanism on a Hilbert space $\mathcal{H}$, where $\mathcal{E}$ is a quantum circuit and $\{M_k\}_{k\in\mathcal{S}}$ is the projection operator for measurement, 
    we are given two quantum state $\rho$ and $\sigma$ encoded from adjacent database 
    and a distance $\tau(\rho,\sigma)\leq d$. The mechanism $\mathcal{A}$ satisfies $(\epsilon,\delta)$-quantum differential privacy if we have
    \begin{equation}
        \sum_{k\in\mathcal{S}}\mathrm{Tr}[M_k\mathcal{E}(\rho)] \leq \exp{(\epsilon)} \cdot \sum_{k\in\mathcal{S}}\mathrm{Tr}[M_k\mathcal{E}(\sigma)] + \delta.
    \end{equation}
    $\mathcal{A}$ is said to satisfy $\epsilon$-quantum differential privacy if $\delta=0$.
\end{Definition}
\section{Quantum differential privacy based on shot noise}

In this section, we will derive new formulas for privacy budget $\epsilon$ in QDP by utilizing shot noise. The complete mathematical derivations can be found in the appendix. We present the overview of the proposed QDP model in Fig.~\ref{fig:dp_model}. We discuss the ${\epsilon}$-differential privacy and extend it to $(\epsilon,\delta)$-differential privacy under two quantum circuits separately. Firstly, we assume a noiseless circuit, where all the privacy budget arises from shot noise. Secondly, we assume a noisy circuit and use depolarizing noise to represent the incoherent noise in quantum circuits.

\begin{Theorem}
    Let $n$ be the number of shots of the quantum circuit. $\{M_i\}$ is a set of projection operators. Assuming that the rank of $M_m$ is the largest and considering it as r. For all quantum data $\rho$ and $\sigma$ such that ${\tau} ({\rho}, {\sigma}) \leqslant d$  and assume $\sigma'=\$(\sigma) $, where $\$ $ is the super-operator and then the $\sigma'$ represent the quantum state to be measured. Therefore, the outcomes can be written as $\mu_0={Tr}({\rho'M_m})$ and $\mu_1={Tr}({\sigma'M_m})$ of the adjacent quantum states. It holds that the measurement of $\{M_i\}$ satisfies ${\epsilon}$-differential privacy with 
    \begin{equation}\label{eq:theorem1}
    {\epsilon}= \frac{{dr}}{(1 -{\mu}) {\mu}}
    \left[ \frac{9}{2}(1 - 2{\mu})  + \frac{3}{2}\sqrt{n}+ \frac{{dr} ({\mu} + {dr})n}{1 -{\mu}}\right],
    \end{equation}
    where $\mu=min\{\mu_1,\mu_0\}$. 
\end{Theorem}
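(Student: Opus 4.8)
The plan is to verify the defining $\epsilon$-quantum differential privacy inequality for the single projector $M_m$ of maximal rank $r$, since the $n$-shot measurement statistics are governed entirely by the two scalars $\mu_0=\mathrm{Tr}(\rho' M_m)$ and $\mu_1=\mathrm{Tr}(\sigma' M_m)$. I would read $\epsilon$ as a bound on the privacy loss $\ln\big(\mathrm{Pr}[\text{outcome}\mid\rho]/\mathrm{Pr}[\text{outcome}\mid\sigma]\big)$ incurred by the finite-shot estimate of these probabilities, so that the argument splits into two independent ingredients: (i) controlling the gap $|\mu_0-\mu_1|$ through the database distance $d$ and the rank $r$, and (ii) converting the sampling fluctuation of $n$ shots into a likelihood-ratio bound with the correct dependence on $n$.

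First I would handle ingredient (i). Writing $\rho'=\$(\rho)$ and $\sigma'=\$(\sigma)$ and using that the super-operator $\$$ is trace-distance contractive, I obtain $\tau(\rho',\sigma')\le\tau(\rho,\sigma)\le d$ directly from \eqref{distance}. Setting $\Delta=\rho'-\sigma'$, which is Hermitian and traceless, its largest eigenvalue in magnitude is at most the sum of its positive eigenvalues, so $\|\Delta\|_\infty\le\tau(\rho',\sigma')\le d$. Since $M_m=\sum_{i=1}^r|v_i\rangle\langle v_i|$ is a rank-$r$ projector, $|\mu_0-\mu_1|=|\mathrm{Tr}(M_m\Delta)|=|\sum_{i=1}^r\langle v_i|\Delta|v_i\rangle|\le r\|\Delta\|_\infty\le dr$. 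This is exactly where the factor $dr$ that multiplies every term of \eqref{eq:theorem1} originates.

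Next, for ingredient (ii), I would model the shot noise by the sampling distribution of the empirical frequency of the eigenvalue-$1$ outcome: over $n$ shots this estimate is approximately $\mathcal{N}(\mu_i,\sigma_i^2)$ with $\sigma_i^2=\mu_i(1-\mu_i)/n$, matching the sampling-theorem treatment of Wecker and Wandzura. Writing the privacy loss of the two Gaussians and standardizing the output as $x=\mu_0+z\sigma_0$, it decomposes into three groups: a mean-shift term of order $(\mu_0-\mu_1)^2/\sigma_1^2$, a term linear in $z$ arising from $z\sigma_0(\mu_0-\mu_1)/\sigma_1^2$, and a term quadratic in $z$ arising from the variance mismatch $z^2(\sigma_0^2-\sigma_1^2)/\sigma_1^2$ together with $\tfrac12\ln(\sigma_1^2/\sigma_0^2)$. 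Using $\sigma_i^2=\mu_i(1-\mu_i)/n$, these three groups carry factors $n$, $\sqrt n$, and $1$ respectively, which are precisely the three powers appearing inside the bracket of \eqref{eq:theorem1}.

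Finally I would bound each group. Restricting the privacy loss to the high-probability bulk $|z|\le 3$ and applying $|z|\le 3$ and $z^2\le 9$ produces the numerical constants $\tfrac32$ and $\tfrac92$; substituting $|\mu_0-\mu_1|\le dr$, $\mu=\min\{\mu_0,\mu_1\}$, the identity $\mu_0(1-\mu_0)-\mu_1(1-\mu_1)=(\mu_0-\mu_1)(1-\mu_0-\mu_1)$, and the coarse bound $\max\{\mu_0,\mu_1\}\le\mu+dr$ then collapses the three groups into $\tfrac{9}{2}(1-2\mu)$, $\tfrac{3}{2}\sqrt n$, and $\tfrac{dr(\mu+dr)n}{1-\mu}$, each already carrying the common prefactor $dr/((1-\mu)\mu)$. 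The main obstacle I anticipate is this last bookkeeping step: the two Gaussians have \emph{different} variances, so the privacy loss is not symmetric and its linear and quadratic-in-$z$ pieces must be bounded simultaneously and uniformly in $z$; keeping the estimates tight enough to land on the stated constants, while justifying the truncation to $|z|\le 3$ as the appropriate trade-off for a pure-$\epsilon$ rather than an $(\epsilon,\delta)$ guarantee, is the delicate part of the argument.
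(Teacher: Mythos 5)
Your proposal is correct and follows essentially the same route as the paper's proof: channel contractivity plus $\mathrm{Tr}\bigl[(\rho'-\sigma')M_m\bigr]\leqslant d\,\mathrm{Tr}(M_m)=dr$, the central-limit Gaussian approximation with binomial variances $\mu_i(1-\mu_i)/n$, expansion of the log-likelihood ratio into $n$-, $\sqrt{n}$-, and $O(1)$-terms, and evaluation on the 3-sigma interval $x=\mu_0\pm 3\sigma_0$ (your $|z|\leqslant 3$), with the same substitutions ($|\mu_0-\mu_1|\leqslant dr$, $\sqrt{\mu_0(1-\mu_0)}\leqslant \tfrac{1}{2}$, $\mu_0\leqslant\mu+dr$, $1-\mu_0-\mu_1\leqslant 1-2\mu$) producing the constants $\tfrac{9}{2}$, $\tfrac{3}{2}$ and the $n$-term. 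The only differences are cosmetic: your standardized-variable bookkeeping versus the paper's direct evaluation at $\mu_0\pm 3\sigma_0$, and your explicit retention of the $\tfrac{1}{2}\ln(\sigma_1^2/\sigma_0^2)$ normalization term (which the paper silently drops), while your closing caveat that the 3-sigma truncation yields only an approximate pure-$\epsilon$ guarantee is precisely the approximation the paper also makes.
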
 

\begin{proof}
We denote ${\rho}' ={\mathcal{E}} ({\rho})$ and ${\sigma}' ={\mathcal{E}} ({\sigma})$ as shown in Fig. \ref{fig:dp_model}. From \cite{zhou2017differential}, we know that if ${\tau} ({\rho}, {\sigma}) \leqslant d$, then ${\tau} ({\rho}', {\sigma}') \leqslant d$ and ${Tr} ({\rho}' -{\sigma}') M_m \leqslant {dTr}
(M_m) = {dr}$.

Since we have ${\mu}_0 = {Tr} (M_m {\rho}')$ and ${\mu}_1 = {Tr} (M_m {\sigma}')$, then we can derive ${\mu}_0 -{\mu}_1 = {Tr} [({\rho}'-{\sigma}') M_m] \leqslant {dr}$. According to the properties of projection measurements, we know that the ${\mu}_0$ represents the probability of obtaining a measurement result of 1, and $1-{\mu}_1$ represents the probability of obtaining a measurement result of 0. Therefore, it is easy to calculate the variance of the single measurement for ${\rho}'$ as ${\sigma}^2_0 = {\mu}_0 (1 -{\mu}_0)$. Similarly, the variance of the single  measurement for ${\sigma}'$ is  ${\mu}_1 (1 -{\mu}_1)$.

We can apply the central limit theorem~\cite{billingsley2017probability} to model the error of multiple measurements of the circuit. This allows us to obtain the distribution of the measuring average value, which follows a normal distribution. This distribution accurately represents the probability distribution of the circuit's outcomes, i.e. ${\mu}_0 $ and ${\mu}_1 $. Hence, the results obtained from measuring the circuit with $M_m$ respectively follow individual normal distributions $\mathcal{N} \left( {\mu}_0, \frac{{\mu}_0 (1
	-{\mu}_0)}{n} \right)$, $\mathcal{N} \left( {\mu}_1, \frac{{\mu}_1 (1
	-{\mu}_1)}{n} \right)$. The following derivation is similar to that of Gaussian noise, with the main distinction being that the variance of Gaussian noise is the same, whereas in this case, it is different. Assuming that $\mu_1=min\{\mu_0,\mu_1\}$, we can obtain the following expression,
 \begin{equation}
  \begin{split}
        \ln \frac{p  ({\rho}')}{p ({\sigma}')}  =&n ({\mu}_0 -{\mu}_1) \left[ {\frac{1 -{\mu}_0
                -{\mu}_1}{2{\mu}_0 {\mu}_1 (1 -{\mu}_0) (1
                -{\mu}_1)} x^2 +}\right. \\ &\left.{ \frac{1}{(1 -{\mu}_0) (1
                -{\mu}_1)} x - \frac{1}{2 (1 -{\mu}_0) (1
                -{\mu}_1)} }\right]\\
            \leqslant  &{ndr} \left[{ \frac{1 -{\mu}_0
                -{\mu}_1}{2{\mu}_0 {\mu}_1 (1 -{\mu}_0) (1
                -{\mu}_1)} x^2 +} \right. \\ &\left.{
                \frac{1}{(1 -{\mu}_0) (1
                -{\mu}_1)} x - \frac{1}{2 (1 -{\mu}_0) (1
                -{\mu}_1)}} \right] \\
                \leqslant &\epsilon.
   \end{split} \label{equ:8}
 \end{equation}

  Given that $x$ represents the possible outcome of the circuit, and a single measurement can only result in either $1$ or $0$, the range of values for $x$ is restricted to $(0, 1)$. To calculate $\epsilon$-differential privacy, we use the 3-sigma rule to approximate. This approximation guarantees that both ${\mu}_0 - 3{\sigma}_0$ and ${\mu}_0 + 3{\sigma}_0$ satisfy the Eq. (\ref{equ:8}), that is,
        \begin{align}
        &{ndr} \left[{ \frac{1 -{\mu}_0
            -{\mu}_1}{2{\mu}_0 {\mu}_1 (1 -{\mu}_0) (1
            -{\mu}_1)} ({\mu}_0 - 3{\sigma}_0)^2 + }\right. \nonumber\\ &\left.{\frac{1}{(1
            -{\mu}_0) (1 -{\mu}_1)} ({\mu}_0 -
        3{\sigma}_0) - \frac{1}{2 (1 -{\mu}_0) (1
            -{\mu}_1)}} \right] \leqslant {\epsilon},\\
        &{ndr} \left[ {\frac{1 -{\mu}_0
            -{\mu}_1}{2{\mu}_0 {\mu}_1 (1 -{\mu}_0) (1
            -{\mu}_1)} ({\mu}_0 + 3{\sigma}_0)^2 + }\right. \nonumber\\ &\left.{ \frac{1}{(1
            -{\mu}_0) (1 -{\mu}_1)} ({\mu}_0 +
        3{\sigma}_0) - \frac{1}{2 (1 -{\mu}_0) (1
            -{\mu}_1)} }\right] \leqslant {\epsilon}.
        \end{align}
Consequently, we can further scale the inequality to derive the final expression as Eq.~\eqref{eq:theorem1}.
\end{proof}
\addtolength{\topmargin}{0.02in}
In general tasks, a large number of shots are often used to attain high accuracy of measuring. However, to satisfy the standard of privacy protection, we can set shots small. This is not contradictory to the usual practice of taking large shots.

Due to the unavoidable noise in current quantum devices, we will also discuss ${\epsilon}$-differential privacy in the noisy circuit. Depolarizing noise is an important type of quantum noise which can be written as:
\begin{equation}
\mathcal{E}_{{Dep}} ({\rho}) = (1 - p) {\rho}+
\frac{p}{D} I,
\end{equation}
where $\rho$ is the density operator, $p$ represents the probability of depolarizing, $D$ is the dimension of the Hilbert space and $I$ is D dimensional identity matrix. 

Briefly speaking, with depolarizing noise in the circuit, we can get a tighter upper bound of ${\mu}_0-{\mu}_1$, which can be mentioned as,
\begin{equation}
        \begin{aligned}
	&\frac{{\mu}_0}{{\mu}_1} \leqslant 1 + \frac{1 - p}{p}
	{dD}\\
	\Leftrightarrow &{\mu}_0 -{\mu}_1 \leqslant \frac{1 - p}{p}
	{\mu}_1 {dD}.
\end{aligned}\label{equ:12}
    \end{equation}  
 Therefore, we put Eq. (\ref{equ:12}) to Eq. (\ref{equ:8}) and get the privacy budget in the noisy circuit.

\begin{Theorem}
    Let $n$ be the number of shots of the quantum circuit. $\{M_i\}$ is a set of projection operators. Assuming that the rank of $M_m$ is the largest and considering it as r. For all quantum data $\rho$ and $\sigma$ such that ${\tau} ({\rho}, {\sigma}) \leqslant d$  and assume $\sigma'=\$(\sigma) $, where $\$ $ is the super-operator and then the $\sigma'$ represent the quantum state to be measured. Therefore, the outcomes can be written as $\mu_0={Tr}({\rho'M_m})$ and $\mu_1={Tr}({\sigma'M_m})$ of the adjacent quantum states. If we add depolarizing noise to the circuit, then the privacy budget of $\epsilon$-differential privacy is,
	\begin{equation}
{\epsilon}= \frac{\alpha}{1 -{\mu}}
\left[ \frac{9}{2} (1 - 2{\mu}) + \frac{3}{2} \sqrt{n} 
\\
+\frac{\alpha {\mu}^2 \left( 1 + \alpha \right)}{1 -{\mu}} n \right],
	\end{equation}
where $\mu=min\{\mu_0,\mu_1\}$, $p$ denotes the probability of depolarizing, $D$ denotes the dimension of Hilbert space and $\alpha=\frac{1-p}{p}drD$.
\end{Theorem}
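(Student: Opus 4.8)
The plan is to recycle the entire skeleton of the proof of Theorem 1 and to change only the input bound on $\mu_0 - \mu_1$. The log-likelihood ratio computed in Eq.~\eqref{equ:8} was obtained purely from the two normal distributions $\mathcal{N}(\mu_0, \mu_0(1-\mu_0)/n)$ and $\mathcal{N}(\mu_1, \mu_1(1-\mu_1)/n)$ delivered by the central limit theorem; it does not reference which circuit generated $\mu_0$ and $\mu_1$. Hence the same quadratic-in-$x$ expression stays valid once the depolarizing channel is inserted, and the only quantity that changes is the admissible gap $\mu_0 - \mu_1$.

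First I would establish the tighter gap bound in Eq.~\eqref{equ:12}. Writing the post-channel states as $(1-p)\rho' + \frac{p}{D}I$ and $(1-p)\sigma' + \frac{p}{D}I$ and measuring with $M_m$ gives $\mu_0 - \mu_1 = (1-p)\mathrm{Tr}[(\rho'-\sigma')M_m] \le (1-p)dr$, while the identity contribution forces $\mu_1 \ge \frac{pr}{D}$. Dividing the two inequalities cancels the rank $r$ and yields $\mu_0/\mu_1 \le 1 + \frac{1-p}{p}dD$, i.e. the bound $\mu_0 - \mu_1 \le \frac{1-p}{p}\mu_1 dD$ already recorded in Eq.~\eqref{equ:12}. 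This is the crucial structural change: whereas the noiseless bound $dr$ is a constant, the noisy bound carries an explicit factor $\mu_1$, which I would then replace by $\mu = \min\{\mu_0,\mu_1\}$.

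Next I would substitute this bound, together with the abbreviation $\alpha = \frac{1-p}{p}drD$, into Eq.~\eqref{equ:8} in place of $dr$, and apply the $3$-sigma rule exactly as in Theorem 1: evaluate the bracketed quadratic at the two extreme points $x = \mu_0 \pm 3\sigma_0$ with $\sigma_0 = \sqrt{\mu_0(1-\mu_0)/n}$, keep the larger value, and approximate $\mu_0 \approx \mu_1 \approx \mu$ in the coefficients. Because the new gap bound contributes an extra factor of $\mu$ relative to the constant $dr$, this is tantamount to performing the replacement $dr \mapsto \alpha\mu$ in the final formula of Theorem 1, which converts the prefactor $\frac{dr}{(1-\mu)\mu}$ into $\frac{\alpha}{1-\mu}$ and the last summand into $\frac{\alpha\mu^2(1+\alpha)n}{1-\mu}$.

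I expect the main obstacle to be the bookkeeping in the final scaling step rather than any conceptual difficulty: one must check that the extra $\mu_1$ in the gap bound cancels exactly one power of $\mu$ in the denominator of the prefactor, and that the cross term $\mu + dr$ of Theorem 1 becomes $\mu(1+\alpha)$ after the substitution. Maintaining the inequalities in the correct direction through the $3$-sigma approximation — where several terms are individually bounded and then recombined — is where sign errors are most likely, so I would carry the two evaluations $x = \mu_0 - 3\sigma_0$ and $x = \mu_0 + 3\sigma_0$ in parallel and retain whichever produces the weaker (larger) bound on $\epsilon$.
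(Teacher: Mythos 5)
Your proposal is correct and follows essentially the same route as the paper: the paper also proves this theorem by taking the tighter depolarizing gap bound $\mu_0-\mu_1 \leqslant \frac{1-p}{p}\mu_1 dD$ of Eq.~(\ref{equ:12}) (which it cites from prior work, whereas you additionally derive it from $\mu_0-\mu_1=(1-p)\mathrm{Tr}[(\rho'-\sigma')M_m]$ and $\mu_1 \geqslant pr/D$) and substituting it for $dr$ in the Theorem~1 machinery, i.e.\ the replacement $dr \mapsto \alpha\mu$. Your bookkeeping is exactly what the paper does: the extra factor of $\mu$ cancels one power of $\mu$ in the prefactor, turning $\frac{dr}{(1-\mu)\mu}$ into $\frac{\alpha}{1-\mu}$, and the cross term $\mu+dr$ becomes $\mu(1+\alpha)$, reproducing the stated $\epsilon$.
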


In addition to ${\epsilon}$-differential privacy, $(\epsilon,\delta)$-differential privacy is also an important concept in classical DP. It allows the situation to exist, where the algorithm may not satisfy DP with a certain probability.

Similar to our previous discussion, we will initially discuss it in the noiseless circuit and then delve into the condition in the noisy circuit.

To obtain the relation between $\delta$ and $\epsilon$,  We apply the notion of the Error function \cite{Andrews1991Special},
\begin{equation}
        \begin{aligned}
	{erf} (x) = \frac{1}{\sqrt{{\pi}}} \int^x_{- x} e^{- t^2}
	{dt} = \frac{2}{\sqrt{{\pi}}} \int^x_0 e^{- t^2} {dt},\\
	{erfc} (x) = 1 - {erf} (x) = \frac{2}{\sqrt{{\pi}}}
	\int^{\infty}_x e^{- t^2} {dt}.
\end{aligned}
    \end{equation}
We have
\begin{equation}
\begin{aligned}
{\delta} &=  \Pr \{ | x -{\mu} | > c \} \\
&= \int^{\infty}_{c
	+{\mu}} e^{- \frac{(m -{\mu})^2}{2{\sigma}^2}}
{dm} + \int^{{\mu}- c}_{- \infty} e^{- \frac{(m
		-{\mu})^2}{2{\sigma}^2}} {dm}.
\end{aligned}
\end{equation}
Assume that
\begin{equation*}
t = \frac{m -{\mu}}{\sqrt{2} {\sigma}},
\end{equation*}
then,
\begin{equation}
        \begin{aligned}
	{\delta}&= \Pr \{ | x -{\mu} | > c \}\\
	&=\sqrt{2} {\sigma} \int^{\infty}_{\frac{c}{\sqrt{2}
			{\sigma}}} e^{- t^2} {dt} + \sqrt{2} {\sigma} \int^{-
		\frac{c}{\sqrt{2} {\sigma}}}_{- \infty} e^{- t^2} {dt}\\
	&=2 \sqrt{2} {\sigma} \int^{\infty}_{\frac{c}{\sqrt{2}
			{\sigma}}} e^{- t^2} {dt}\\
	&=\sqrt{2{\pi}} {\sigma} {erfc} \left( \frac{c}{\sqrt{2}
			{\sigma}} \right).
\end{aligned}
    \end{equation}

\begin{Theorem}
Let $n$ be the number of shots of the quantum circuit. $\{M_i\}$ is a set of projection operators. Assuming that the rank of $M_m$ is the largest and considering it as r. For all quantum data $\rho$ and $\sigma$ such that ${\tau} ({\rho}, {\sigma}) \leqslant d$  and assume $\sigma'=\$(\sigma) $, where $\$ $ is the super-operator and then the $\sigma'$ represent the quantum state to be measured. Therefore, the outcomes can be written as $\mu_0={Tr}({\rho'M_m})$ and $\mu_1={Tr}({\sigma'M_m})$ of the adjacent quantum states. It is holds that the measurement of $\{M_i\}$ satisfy $(\epsilon,\delta)$-differential privacy, 
	\begin{equation}
{\epsilon}= \frac{{ndr}}{{\mu} (1 -{\mu})} \left[ \frac{(1 -
2{\mu} - {ndr}) c^2}{2{\mu} (1 -{\mu} -
{ndr})} + c + \frac{{ndr}}{2} \right],
	\end{equation}
where $\mu=min\{\mu_0,\mu_1\}$, ${\delta}= \sqrt{2{\pi}} {\sigma} {erfc}
\left( \frac{c}{\sqrt{2} {\sigma}} \right)$, ${\sigma}= \sqrt{\frac{{\mu} (1 -{\mu})}{n}}$.
\end{Theorem}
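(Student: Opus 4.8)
The plan is to treat this final theorem as the $(\epsilon,\delta)$ relaxation of Theorem 1 (the noiseless $\epsilon$-DP case): rather than forcing the privacy loss to stay below $\epsilon$ for \emph{every} measurement outcome, I would allow the bound to fail on a low-probability tail and charge that failure probability to $\delta$. Concretely, I would first invoke the standard characterization of $(\epsilon,\delta)$-DP through the privacy-loss random variable: if the set of outcomes on which $\ln\frac{p(\rho')}{p(\sigma')} > \epsilon$ has probability at most $\delta$ under the output distribution of $\rho$, then for every measurable $\mathcal{S}$ one gets $\Pr[\mathcal{M}(\rho)\in\mathcal{S}] \le e^{\epsilon}\Pr[\mathcal{M}(\sigma)\in\mathcal{S}] + \delta$. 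This reduces the theorem to two independent tasks: bounding the tail probability, and bounding the privacy loss on the complementary high-probability region.

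For the tail, I would reuse the central-limit model from the proof of Theorem 1, where the sample mean $x$ of $n$ projective shots is approximated by a normal law with standard deviation $\sigma = \sqrt{\mu(1-\mu)/n}$ and $\mu = \min\{\mu_0,\mu_1\}$. Taking the confidence region to be $\{|x-\mu|\le c\}$, the tail $\Pr\{|x-\mu|>c\}$ is evaluated exactly as in the erfc computation preceding the theorem: the substitution $t=(m-\mu)/(\sqrt{2}\sigma)$ collapses the two Gaussian tails into an error-function integral and yields $\delta = \sqrt{2\pi}\,\sigma\,\mathrm{erfc}(c/(\sqrt{2}\sigma))$. This part is essentially already assembled in the excerpt, so the only care needed is to confirm that $c$ is the free parameter trading off against $\epsilon$ (it replaces the fixed $3\sigma$ of Theorem 1).

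For the privacy loss on $[\mu-c,\mu+c]$, I would return to the quadratic expression for $\ln\frac{p(\rho')}{p(\sigma')}$ in $x$ derived in Theorem 1. Since the leading coefficient is positive and its vertex sits at $x=-\mu_0\mu_1/(1-\mu_0-\mu_1)<0$, the quadratic is increasing across $(0,1)$, so its maximum on the interval is attained at the right endpoint $x=\mu+c$. I would substitute $x=\mu+c$ and then replace $\mu_0,\mu_1$ using $\mu=\min\{\mu_0,\mu_1\}$ together with $\mu_0-\mu_1\le dr$. Writing $1-\mu_0-\mu_1 = 1-2\mu-(\mu_0-\mu_1)$ and absorbing the overall factor $n(\mu_0-\mu_1)\le ndr$ should collapse the expression into the stated budget, with $ndr$ appearing both as the outer prefactor and inside the bracket.

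The hardest part will be this last step: performing the endpoint substitution and the $\mu_0,\mu_1\to\mu$ replacements so that the algebra reduces to exactly $\frac{(1-2\mu-ndr)c^2}{2\mu(1-\mu-ndr)} + c + \frac{ndr}{2}$. The delicate points are (a) verifying monotonicity on the interval so that $x=\mu+c$ is genuinely the maximizer, and (b) tracking the factors of $n$ consistently, since the $ndr$ terms inside the bracket arise from combining $n(\mu_0-\mu_1)\le ndr$ with $\mu_0+\mu_1=2\mu+(\mu_0-\mu_1)$ rather than from a single clean bound. By comparison, the tail estimate and the reduction lemma are routine.
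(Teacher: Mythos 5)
Your proposal follows essentially the same route as the paper's own proof: the paper likewise computes $\delta=\sqrt{2\pi}\,\sigma\,\mathrm{erfc}\!\left(\frac{c}{\sqrt{2}\sigma}\right)$ as the Gaussian tail $\Pr\{|x-\mu|>c\}$ under the CLT model, substitutes the endpoint $x=\mu_0+c$ into the quadratic privacy-loss bound inherited from Theorem 1, and then uses $\mu_0-\mu_1\leqslant dr$ to reduce to the stated expression (including the $ndr$ factors inside the bracket that you correctly flag as the delicate bookkeeping step). Your added justifications --- the privacy-loss-tail characterization of $(\epsilon,\delta)$-DP and the monotonicity argument showing the quadratic's vertex lies at $x=-\frac{\mu_0\mu_1}{1-\mu_0-\mu_1}<0$ so the right endpoint is the maximizer --- are refinements the paper leaves implicit, not a different method.
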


In the noisy circuit, the derivation of $\epsilon$ and $\delta$ is similar to the noiseless circuit.
\begin{Theorem}
Let $n$ be the number of shots of the quantum circuit. $\{M_i\}$ is a set of projection operators. Assuming that the rank of $M_m$ is the largest and considering it as r. For all quantum data $\rho$ and $\sigma$ such that ${\tau} ({\rho}, {\sigma}) \leqslant d$  and assume $\sigma'=\$(\sigma) $, where $\$ $ is the super-operator and then the $\sigma'$ represent the quantum state to be measured. Therefore, the outcomes can be written as $\mu_0={Tr}({\rho'M_m})$ and $\mu_1={Tr}({\sigma'M_m})$ of the adjacent quantum states. If we add depolarizing noise to the circuit, then it will provide $(\epsilon,\delta)$-differential privacy, 
	\begin{equation}
{\epsilon}= \frac{{\alpha}}{1 -{\mu}} \left[
\frac{(1 - 2{\mu} - n{\alpha}) c^2}{2{\mu} (1
-{\mu} - n{\alpha})} + c + \frac{n{\alpha}}{2}
\right],
	\end{equation}
where $\mu=min\{\mu_0,\mu_1\}$, ${\alpha}= \frac{1 - p}{p}drD$, ${\delta}= \sqrt{2{\pi}} {\sigma} {erfc}
\left( \frac{c}{\sqrt{2} {\sigma}} \right)$, ${\sigma}= \sqrt{\frac{{\mu} (1 -{\mu})}{n}}$, and $p$ is noise parameter.
\end{Theorem}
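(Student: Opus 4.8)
The plan is to reuse the entire machinery built for the preceding noiseless $(\epsilon,\delta)$ result (Theorem 3) and to change only the bound that controls the gap $\mu_0-\mu_1$, since the statement itself signals that ``the derivation of $\epsilon$ and $\delta$ is similar to the noiseless circuit.'' Concretely, the central limit theorem again supplies two approximating Gaussians $\mathcal{N}(\mu_0,\mu_0(1-\mu_0)/n)$ and $\mathcal{N}(\mu_1,\mu_1(1-\mu_1)/n)$ for the measured average of $M_m$ on $\rho'$ and $\sigma'$, so the log-likelihood ratio is still the quadratic-in-$x$ expression of Eq.~\eqref{equ:8}. Rather than forcing this quadratic below $\epsilon$ pointwise via the $3\sigma$ rule (as in Theorems 1 and 2), I would only demand that it stay below $\epsilon$ on the central region $|x-\mu|\le c$ and absorb the remaining tail probability into $\delta$, which is exactly the relaxation that produces $(\epsilon,\delta)$-DP.

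First I would fix the cutoff $c$ by the tail computation already carried out before Theorem 3: with $\sigma=\sqrt{\mu(1-\mu)/n}$ and the substitution $t=(m-\mu)/(\sqrt{2}\sigma)$, the outer mass is $\delta=\Pr\{|x-\mu|>c\}=\sqrt{2\pi}\,\sigma\,\mathrm{erfc}\!\left(c/(\sqrt{2}\sigma)\right)$, which fixes the $\epsilon$--$\delta$ trade-off. Next I would evaluate the quadratic of Eq.~\eqref{equ:8} at the worst-case endpoint $x=\mu+c$; because the leading coefficient carries the sign of $1-\mu_0-\mu_1>0$ and the linear coefficient is positive, the function is increasing for positive $x$, so the maximum over $[\mu-c,\mu+c]$ is attained at $x=\mu+c$. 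Taking $\mu=\min\{\mu_0,\mu_1\}=\mu_1$ and writing $\mu_0=\mu+(\mu_0-\mu_1)$ then lets me collapse the $\mu_0,\mu_1$-dependent coefficients into functions of $\mu$ alone plus the single gap $\mu_0-\mu_1$.

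The only genuinely new step is the final substitution: in place of the noiseless estimate $\mu_0-\mu_1\le dr$ used in Theorem 3, I would insert the sharper depolarizing bound of Eq.~\eqref{equ:12}, namely $\mu_0-\mu_1\le \frac{1-p}{p}\,\mu\,dD$, so that every occurrence of $dr$ is promoted to the noise-scaled quantity $\alpha=\frac{1-p}{p}drD$. Propagating this through the prefactor and through the bracketed terms $1-2\mu-n\alpha$ and $1-\mu-n\alpha$, and discarding lower-order contributions, should reproduce the stated closed form.

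I expect the main obstacle to be the bookkeeping of this last substitution rather than any conceptual difficulty: one must verify that collapsing $\mu_0,\mu_1\to\mu$ and replacing the gap by its upper bound both push $\epsilon$ in the conservative (upward) direction simultaneously, that the sign condition $1-\mu-n\alpha>0$ is preserved so the endpoint $x=\mu+c$ really is the maximizer, and that the $r$-versus-$D$ mismatch between the noiseless factor $dr$ and the definition $\alpha=\frac{1-p}{p}drD$ is reconciled consistently. Checking that the $\mathrm{erfc}$ tail estimate remains valid when the two Gaussians carry slightly different variances---the analysis centers everything at the smaller mean $\mu$---is the one place where the ``similar to the noiseless'' argument needs an explicit justification.
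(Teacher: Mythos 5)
Your proposal follows essentially the same route as the paper: it reuses the Theorem~3 machinery (CLT Gaussians, the quadratic log-likelihood ratio of Eq.~(\ref{equ:8}), the cutoff $c$ with tail mass ${\delta}=\sqrt{2\pi}\,\sigma\,\mathrm{erfc}\left(c/(\sqrt{2}\sigma)\right)$, and substitution of the upper endpoint into the quadratic) and then swaps the noiseless gap bound $\mu_0-\mu_1\leqslant dr$ for the depolarizing bound of Eq.~(\ref{equ:12}), which is exactly the paper's stated derivation. One bookkeeping remark: the depolarizing bound carries an extra factor of $\mu$ (so the effective substitution is $dr\mapsto\alpha\mu$ rather than $dr\mapsto\alpha$), and it is precisely this $\mu$ that cancels the $\mu$ in the prefactor denominator to produce $\frac{\alpha}{1-\mu}$ instead of $\frac{\alpha}{\mu(1-\mu)}$ --- a detail you flagged (along with the paper's own $r$-versus-$D$ inconsistency) but did not fully resolve.
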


Regardless of the presence of noise in the circuit and for both the $\epsilon$-differential privacy and $(\epsilon,\delta)$-differential privacy, we believe that adjusting the number of shots is a more convenient way to achieve a specific privacy budget compared to other methods, like adding additional noise to the circuit or applying an extra error correction circuit.
\section{Numerical simulation}
In this section, we conduct simulations to analyze the relation between privacy budget $\epsilon$ and other variables in quantum circuits. We mainly discuss the impact of the number of shots and depolarizing noise parameters.

\subsection{The impact of the number of measurements}
Firstly, we perform the numerical simulations for $\epsilon$-differential privacy in the noiseless circuit. The default values of the variables used in our calculations are that $\mu=0.15, r=1,\beta=0.997$, where $\mu$ denotes the expected value of the circuit output, $r$ denotes the rank of projection observable, $\beta$ denotes the confident level. Figure. \ref{fig:eps} illustrates that as the number of shots increases, the level of $\epsilon$ also increases which means the privacy protection level is lower. With less number of shots, the noise of measurements is higher which can provide a higher privacy preservation level. It can be concluded that for smaller values of shots, ranging from 5 to 15, the quantum algorithm could offer relatively effective privacy protection.

\begin{figure}[htbp]
 \centering
    \includegraphics[width=0.42\textwidth]{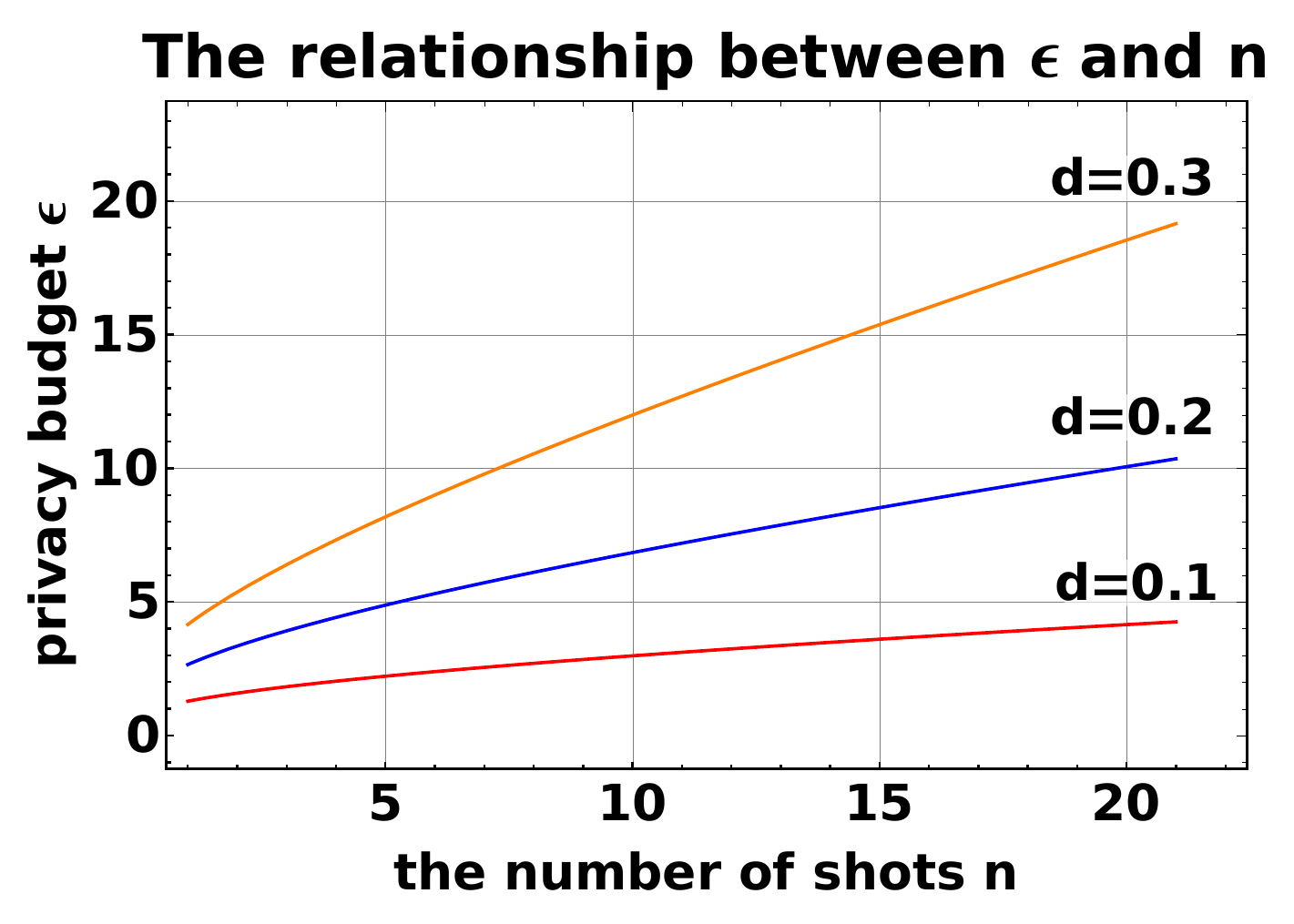}
  \caption{Fig. 3: $\epsilon-$differential privacy.}
  \label{fig:eps}
\end{figure}

\subsection{The impact of quantum noise}
Secondly, we will continue to perform numerical simulations for $\epsilon$-differential privacy with depolarizing noise. We start with depicting the relationship between privacy budget $\epsilon$ and noise parameter $p$. The default values used in our calculations are as follows: $\mu=0.15, r=1,\beta=0.997, D=2,n=10$. Additionally, we describe the relationship between $\epsilon$ and the number of shots $n$, using the default values: $\mu=0.15,r=1,\beta=0.997, D=2,d=0.1$, where $D$ denotes the dimension of Hilbert space, $r$ denotes the rank of observable. Fig. \ref{fig4} shows that as $p$ increases, $\epsilon$ decreases, which means that privacy protection is stronger. The relationship between $\epsilon$ and $n$ is consistent with the conclusion of the noiseless circuit.
\begin{figure}[htbp]
  \centering 
  \begin{subfigure}[b]{0.42\textwidth}
    \centering 
    \includegraphics[width=\textwidth]{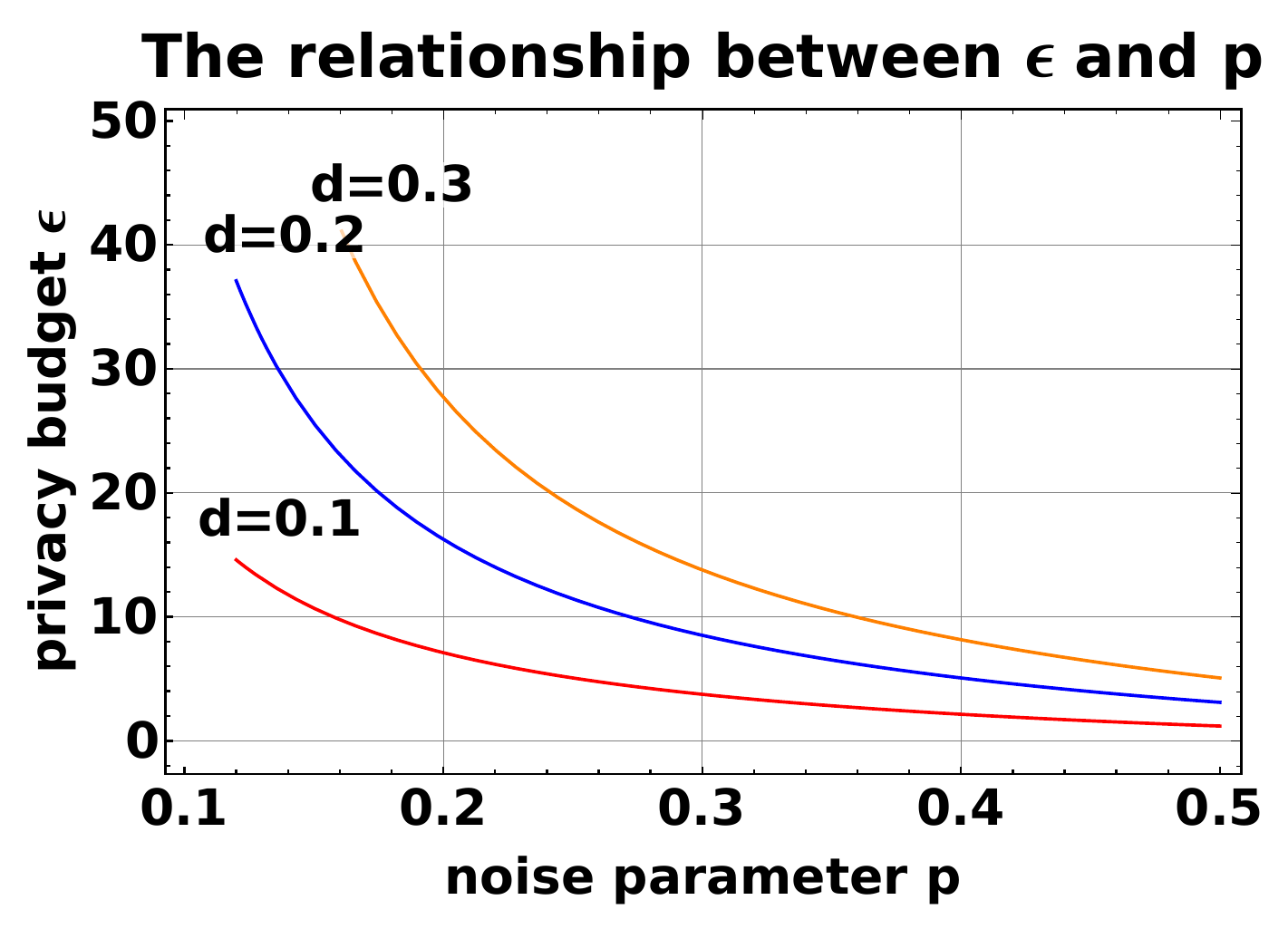}
  \end{subfigure}
  \hfill
  \begin{subfigure}[b]{0.42\textwidth}
    \centering 
    \includegraphics[width=\textwidth]{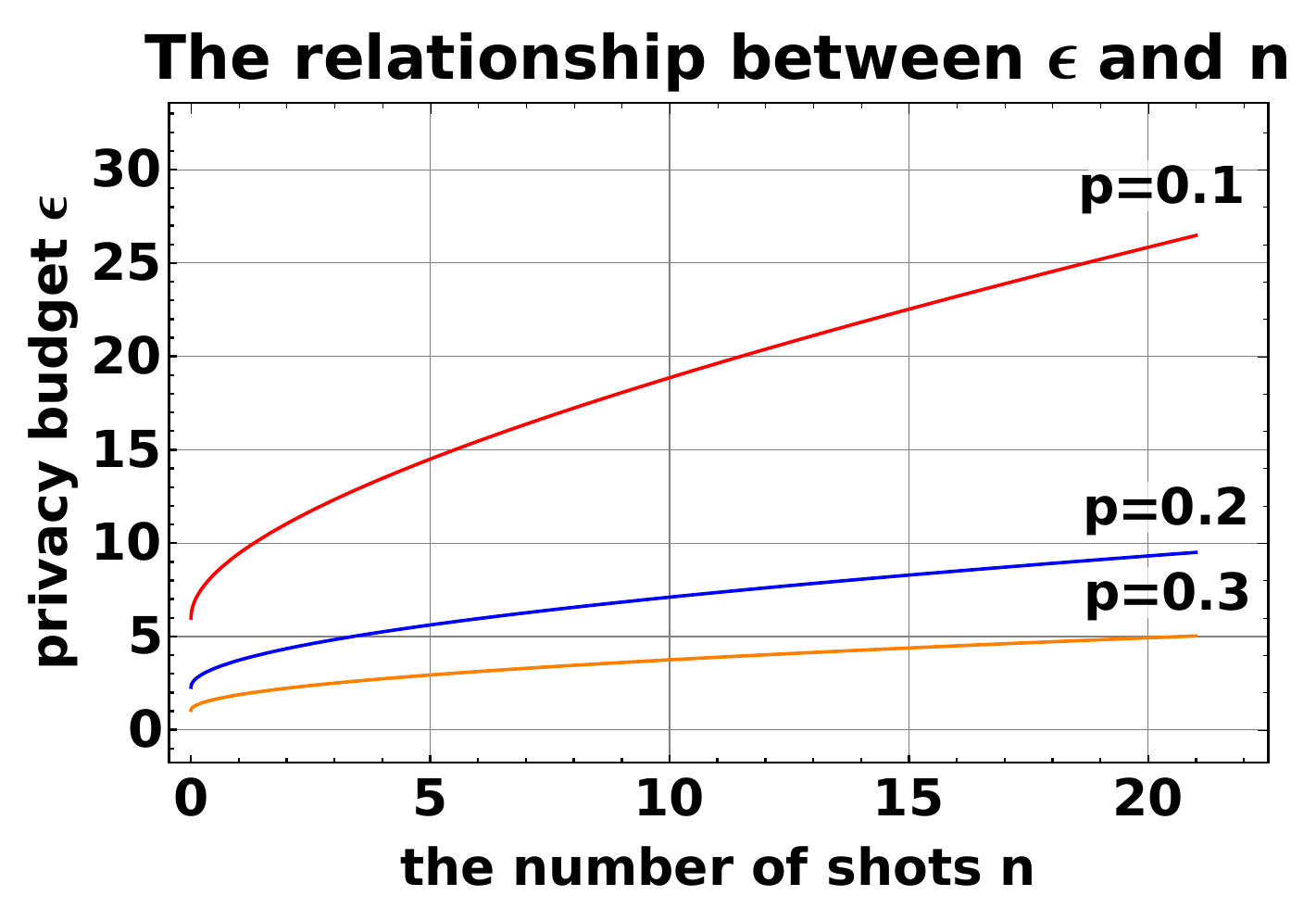}
  \end{subfigure}
  \caption{Fig. 4: $\epsilon-$differential privacy with depolarizing noise.}
  \label{fig4}
\end{figure}

\subsection{$(\epsilon,\delta)$-quantum differential privacy}
Here we discuss the factors affecting $(\epsilon, \delta)$-differential privacy and plot the relationship between $\epsilon$ and $\delta$ when $n$ is fixed, as well as the relationship between $\epsilon$ and $n$ when $\delta$ is fixed. The default parameters are selected to be consistent with the previous. 

Compared with Fig \ref{fig:eps}, the same value of $n$ results in significantly smaller privacy budget $\epsilon$. This shows that under a certain fault tolerance rate, privacy protection can be improved. Moreover, from Fig. \ref{fig:epsdelta}, when $n$ is constant, different values of $\delta$ will not cause too much difference in $\epsilon$.

\begin{figure}[htbp]
  \centering
  \begin{subfigure}[b]{0.42\textwidth}
    \centering
    \includegraphics[width=\textwidth]{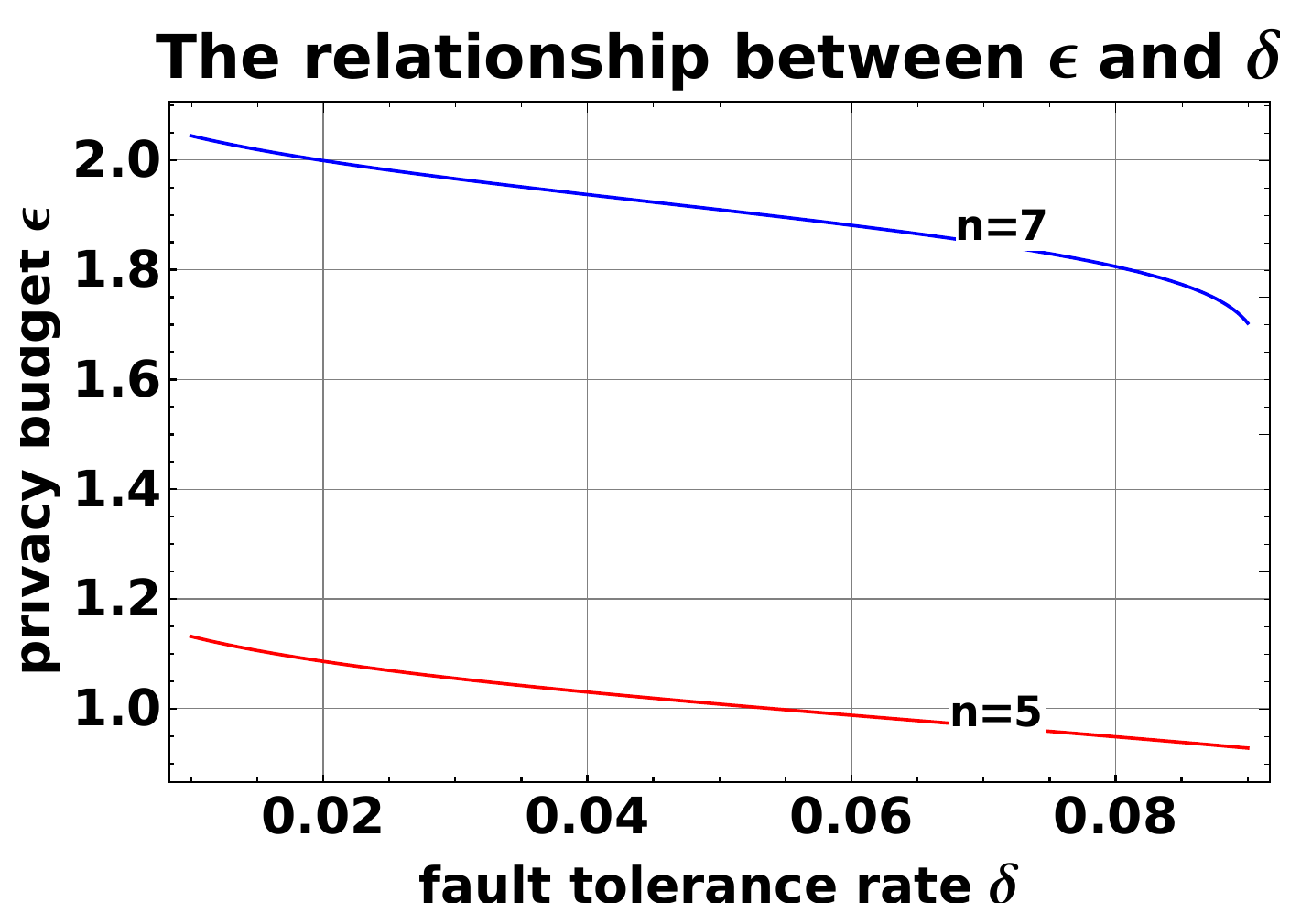}
  \end{subfigure}
  \begin{subfigure}[b]{0.42\textwidth}
    \centering
    \includegraphics[width=\textwidth]{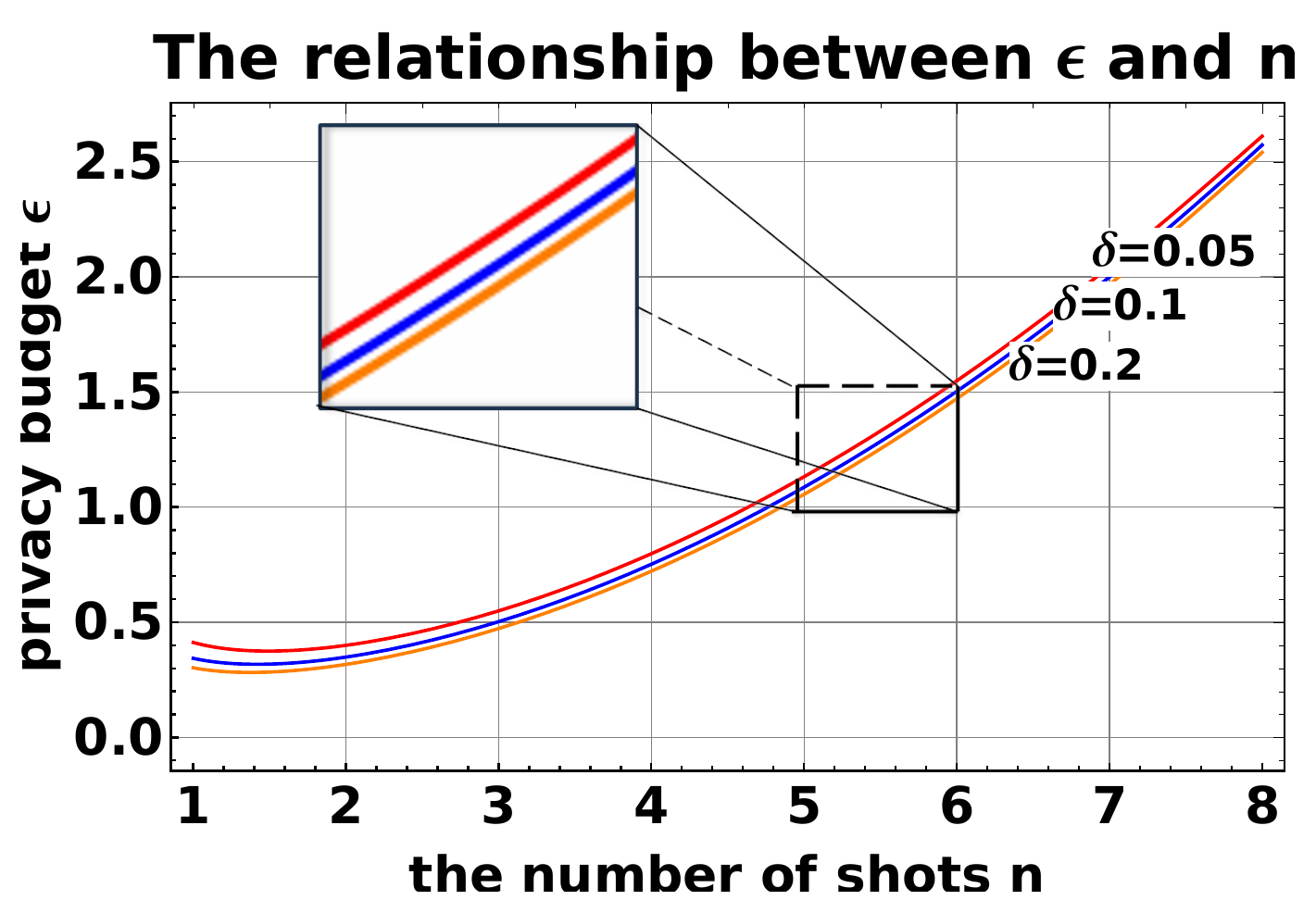}
  \end{subfigure}
  \caption{Fig. 5: $(\epsilon,\delta)-$differential privacy.}
  \label{fig:epsdelta}
\end{figure}
\section{CONCLUSION AND DISCUSSIONS}
In this paper, we obtain the relationship between shot noise and privacy budget. We specifically focus on two privacy notions: $\epsilon$-differential privacy and $(\epsilon,\delta)$-differential privacy. 
We believe that reducing the number of shots is more convenient than introducing additional noise into the circuit to achieve a specific privacy budget. Here are two reasons, on the one hand, the latter involves modifying the circuit, which may introduce more quantum gates during computation, thereby increasing the algorithm complexity. On the other hand, reducing the number of measurements directly reduces complexity and has nothing to do with the circuit structure.
In this paper, we mainly focus on theoretical calculations and numerical simulations, with no experiments conducted on quantum devices. Further experiments can be conducted to implement QDP in the future. Furthermore, it is worth exploring the potential to obtain a simple and neat mathematical expression of QDP.



\bibliographystyle{ieeetr}
\bibliography{cite}

\newpage

\end{document}